\documentclass[12pt]{article}
\usepackage{slashbox}
\usepackage{multirow}
\usepackage[T1]{fontenc}
\usepackage[cp1250]{inputenc}

\usepackage{graphicx}
\usepackage{color}
\usepackage{enumerate}
\usepackage{rotating}
\usepackage{tikz}
\usepackage{algpseudocode}
\usepackage{algorithm}


\pagestyle{myheadings}
\textheight 9in \textwidth 6in  
\leftmargin 0.10in \rightmargin 0.10in
\topmargin -.30in \headsep .5in
\evensidemargin 0.10in \oddsidemargin 0.3in

\usepackage{amsmath, amsthm, amsfonts, amssymb}

\newtheorem{theorem}{Theorem}[section]

\newtheorem{corollary}[theorem]{Corollary}

\newtheorem{lemma}[theorem]{Lemma}

\begin{document}
\markboth{ }{}
\title{\bf Scheduling of unit-length jobs with bipartite incompatibility graphs on four uniform machines\footnote{Project has been 
partially supported by Narodowe Centrum Nauki under contract 
DEC-2011/02/A/ST6/00201}}
\date{}
\author{Hanna Furma\'nczyk\footnote{Institute of Informatics,\ University of Gda\'nsk,\ Wita Stwosza 57, \ 80-309 Gda\'nsk, \ Poland. \ e-mail: hanna@inf.ug.edu.pl}, 
\ Marek Kubale \footnote{Department of Algorithms and System Modelling,\ Gda\'nsk University of Technology,\ Naruto\-wi\-cza 11/12, \ 80-233 Gdańsk, \ Poland. \ e-mail: kubale@eti.pg.gda.pl}}

\markboth{H. Furma\'nczyk, M. Kubale}{Scheduling of unit-length jobs on four uniform machines}

\maketitle

\begin{abstract}
In the paper we consider the problem of scheduling $n$ identical jobs on 4 uniform machines with speeds $s_1 \geq s_2 \geq s_3 \geq s_4,$ respectively. Our aim is to find a schedule with 
a minimum possible length. We assume that jobs are subject to some kind of mutual exclusion constraints modeled by a bipartite incompatibility graph of degree $\Delta$, where two incompatible 
jobs cannot be processed on the same machine. We show that the problem is NP-hard even if $s_1=s_2=s_3$. If, however, $\Delta \leq 4$ and $s_1 \geq 12 s_2$, $s_2=s_3=s_4$, then the problem 
can be solved to optimality in time $O(n^{1.5})$. The same algorithm returns a solution of value at most 2 times optimal provided that $s_1 \geq 2s_2$. Finally, we study the case 
$s_1 \geq s_2 \geq s_3=s_4$ and give an $O(n^{1.5})$-time $32/15$-approximation algorithm in all such situations.
\end{abstract}

{\bf Keywords:} {bipartite graph, equitable coloring, NP-hardness, polynomial algorithm, scheduling, uniform machine}

\section{Introduction}
Imagine you have to arrange a dinner for 40 people and you have at your disposal 4 round tables with different numbers of seats (not greater than 16). 
You know that among your guests there are vegetarians and non-vegetarians. Moreover, each vegetarian is in bad relations with at most 4 non-vegetarians and vice-versa. 
Your task is to assign the people to the tables in such a way that no two of them being in bad relations seat at the same table. In the paper we show how to solve 
this and related problems.

Our problem can be expressed as the following scheduling problem. Suppose we have $n$ identical jobs $j_1,\ldots,j_n$, so we assume that they all have unit execution 
times, in symbols $p_i=1$, to be processed on four non-identical machines $M_1, M_2, M_3,$ and $M_4$. These machines run at different speeds $s_1 \geq s_2 \geq s_3 \geq s_4$, respectively. 
However, they are uniform in the sense that if a job is executed on a machine $M_i$, it takes $1/s_i$ time units to be completed. It refers to the situation where the machines are of 
different generations, e.g. old and slow, new and fast, etc.

Our scheduling problem would be trivial if all the jobs were compatible. Therefore we assume that some pairs of jobs cannot be processed on the same machine due to some technological 
constraints. More precisely, we assume that each job is in conflict with at least 1 and at most 4 other jobs. Moreover, we assume that the underlying incompatibility graph $G$, whose 
vertices are jobs and edges correspond to pairs of jobs being in conflict, is a bipartite graph (without isolated vertices). For example, all graphs in our figures are bipartite. Notice that 
two jobs being in conflict may be executed in intersecting time intervals. A load of $k$ jobs on a machine $M_i$ requires the processing time $k/s_i$, and all the jobs are ready for 
processing at the same time. Alternatively, if a load on $M_i$ is not given explicitly, we are using the notation $C(M_i)$ to mean the schedule length on the machine $M_i$. By definition, each 
load forms 
an independent set (color) in $G$. Therefore, in what follows we will be using the terms job/vertex and load/color/independent set interchangeably. Since all the tasks have to be executed, 
the problem is to find a 4-coloring, i.e. a decomposition of $G$ into 4 independent sets $I_1, I_2, I_3,$ and $I_4$ such that the schedule length $C_{\max} = \max\{|I_i|/s_i: i =1,\ldots,4\}$ 
is minimized, in symbols $Q4|p_i=1,G=bipartite| C_{\max}$.

There are several papers devoted to chromatic scheduling in the presence of mutual exclusion constraints. Boudhar in \cite{boudhar,boudhar3} studied the problem of batch scheduling with 
complements of bipartite and split graphs, respectively. Finke et al. \cite{finke} considered this problem with complements of interval graphs. Other models of batch scheduling with 
incompatibility constraints were studied in \cite{demange,werra}. Our problem can also be viewed as a particular variant of scheduling with conflicts \cite{even}. In all the papers the 
authors assumed identical parallel machines. However, to the best of our knowledge little work has been done on scheduling problems with uniform machines involved (cf. \cite{furm,li}).

The rest of this paper is organized as follows. In Section \ref{np} we show that  the general problem is NP-hard even if $s_1=s_2=s_3$. In Section \ref{alg} we show that if $s_1 \geq 12s_2,$ $s_2=s_3=s_4$, 
then the problem can be solved to optimality in time $O(n^{1.5})$ provided that the degree of $G$ is $\Delta \leq 4$. The same algorithm returns a solution of value at most 2 times optimal 
provided that $s_1 \geq 2s_2$, $s_2=s_3=s_4$. In Section 4 we study the case $s_1 \geq s_2  \geq s_3=s_4$ and give an $O(n^{1.5})$-time $32/15$-approximation algorithm in 
all such cases. Finally, we discuss possible extensions of our model to more than four machines. 

\section{NP-completeness proof}\label{np}
We begin with introducing a few basic notions concerning graph coloring. Given graph $G=(V,E)$, a $k$-coloring of $G$ is a mapping $c: V \rightarrow \{1, \ldots ,k\}$ such that for all edges 
$\{u,v\} \in E$ we have $c(u) \neq c(v)$. The smallest $k$ for which $G$ is $k$-colorable is called the \emph{chromatic number} of $G$ and denoted $\chi(G)$. A graph $G=(V,E)$ is said to be 
\emph{equitably $k$-colorable} if and only if its vertex set can be partitioned into independent sets $V_1, \ldots,V_k \subset V$, possibly empty, such that $||V_i| - |V_j|| \leq 1$ for all $i,j =1,\ldots,k$. 
The smallest $k$ for which $G$ admits such a coloring is called the \emph{equitable chromatic number} of $G$ and denoted $\chi_=(G)$. Graph $G$ has a \emph{semi-equitable $k$-coloring} 
$(k \geq 3)$, if there exists a partition of its vertices into independent sets $V_1,\ldots,V_k \subset V$ such that one of these subsets, say $V_i$, is of size 
$\not \in \{\lfloor n/k \rfloor, \lceil n/k \rceil \}$, and the remaining subgraph $G - V_i$ is equitably $(k-1)$-colorable. In the following we will say that graph $G$ has 
$(V_1,\ldots ,V_k)$-coloring to express explicitly a partition of $V$ into $k$ independent sets. If, however, only the cardinalities of color classes are important, we will use the 
notation $[|V_1|,\ldots,|V_k|]$. For example, the graph in Fig.~\ref{fig1} has one equitable coloring of type $[3,2,2,2]$, one semi-equitable coloring of type $[6,1,1,1]$ and several other types of 
colorings. 

Let us recall some basic facts concerning the colorability of bipartite graphs. First of all, for any bipartite graph $G$ we have $\chi(G) = 2$. Such a 2-coloring can be obtained in time 
proportional to the size of $G$ while traversing it in a \texttt{DFS} order. Moreover, Chen and Yen \cite{chen} proved that any bipartite graph $G$ with $\Delta \geq 2$ is equitably 
$\Delta$-colorable in linear time if and only if $G$ is different from a complete bipartite graph $K_{2q+1,2q+1}$ for all $q \geq 1$. 

The maximal size of an independent set in $G$ is called the \emph{independence number} of $G$ and denoted $\alpha(G)$. Since $\alpha(G)\chi(G)\geq n$, we have a lower bound 
$\alpha(G)\geq n/\chi(G)$ on it. On the other hand, the maximal gap between the sizes of independent sets in a bipartite graph is for $K_{1,\Delta}$. This follows that 
$\alpha(G) \leq n\Delta/(\Delta+1)$. Since in our case $\chi(G)=2$ and $\Delta \leq 4$, we have
\begin{equation}
n/2 \leq \alpha(G) \leq 4n/5.\label{ine}
\end{equation}
Note that an independent set of size $\alpha(G)$ can be computed in $O(n^{1.5})$ time by finding a maximum matching in $G$ (see Hopcroft and Karp \cite{hop}), since one of the two endpoints of each edge in the maximum 
matching belongs to the complement of maximum independent set in $G$.

In the following we will need the \texttt{Partition Into Bounded Independent Sets} problem, which is defined as follows: Given a graph $G=(V,E)$ and positive integers $k,l$, the question is 
whether 
there is a partition of $V$ into independent sets $V_1,\ldots,V_k \subset V$ such that $|V_i| \leq l$ for each $i=1,\ldots,k$. We shall call this the PIBIS$(G,k,l)$ problem. Since 
PIBIS$(G,k,n)$ is a well-known NP-complete $k$-coloring problem, so is PIBIS$(G,k,l)$, $l<n$. Bodlaender and Jansen \cite{bodlaender} proved that the PIBIS$(G,3,l)$ problem remains 
NP-complete even if $G$ is bipartite. Now we are ready to prove 
\begin{theorem}
The $Q4|p_i=1,G=bipartite|C_{\max}$ problem is \emph{NP}-hard even if $s_1=s_2=s_3$. \label{tw21}
\end{theorem}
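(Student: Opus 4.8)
The plan is to reduce from PIBIS$(G,3,l)$ on bipartite graphs, which the excerpt attributes to Bodlaender and Jansen as NP-complete. This is the natural source problem because the statement fixes $s_1=s_2=s_3$, which means three of our four machines run at a common speed, and a schedule on three equal-speed machines is essentially a partition into three independent sets with a common cardinality bound. So the core of the reduction should encode "partition the vertices of a bipartite graph into 3 independent sets, each of size at most $l$" into our scheduling problem.

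**Here is how I would set it up.** Given an instance $(G, 3, l)$ of PIBIS with $G$ bipartite on $N$ vertices, I would build a scheduling instance on four machines by taking the incompatibility graph $G'$ to be $G$ together with some auxiliary gadget, and choosing the speeds $s_1=s_2=s_3 = s$ and $s_4$ together with a target makespan $T$. The idea is to choose $T = l/s$, so that each of the three fast machines can hold at most $l$ jobs within the deadline. The fourth machine $M_4$ must be handled carefully: I would either make $s_4$ small enough (or attach a gadget forcing its load) so that $M_4$ cannot absorb vertices of $G$ that would otherwise overflow the three equal machines. Concretely, one clean choice is to pad $G$ with an independent set of dummy vertices, all mutually compatible with each other but incompatible with nothing relevant, sized so that in any schedule meeting the deadline $T$ exactly those dummies occupy $M_4$ up to its capacity, leaving the real vertices of $G$ to be distributed among $M_1, M_2, M_3$. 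Then a schedule of length $\leq T$ exists if and only if $V(G)$ admits a partition into three independent sets each of size $\leq l$, i.e. if and only if the PIBIS instance is a yes-instance. Since $G'$ remains bipartite (adding an independent set of isolated-type dummies preserves bipartiteness), the reduction stays inside the promised graph class.

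**The main obstacle will be controlling the fourth machine.** With only three equal machines the correspondence to PIBIS$(\cdot,3,l)$ is immediate, but the fourth machine gives the scheduler extra freedom: it could offload some real vertices of $G$ onto $M_4$ and thereby solve instances that are PIBIS no-instances. The whole reduction hinges on neutralizing $M_4$ — forcing it to be either useless (too slow to meet the deadline for any extra real job) or fully committed to dummy load. I would therefore spend the most care on the arithmetic relating $s_4$, $T$, and the number of dummy vertices, ensuring that (i) in a yes-instance the dummies plus a valid 3-partition give a schedule of length exactly $T$, and (ii) in any schedule of length $\leq T$ the fourth machine is forced to carry precisely the dummy jobs, so that the real jobs are confined to $M_1,M_2,M_3$ in three classes of size $\leq l$. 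The verification that the construction is polynomial in the size of the PIBIS instance and that the speeds can be encoded in polynomial space is routine and I would state it briefly at the end.

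**A secondary check** is that the bipartite structure is preserved and that we do not accidentally force $\Delta$ to be small; since Theorem~\ref{tw21} makes no bound on $\Delta$ (the degree restriction $\Delta\le 4$ belongs to the later positive results), the dummy gadget is free to create whatever degrees are convenient, which gives ample room to make the capacity argument tight. I expect the proof to conclude by observing that the constructed instance has a schedule of length at most $T$ exactly when the original bipartite graph has the desired bounded 3-partition, establishing NP-hardness of $Q4|p_i=1,G=bipartite|C_{\max}$ under $s_1=s_2=s_3$.
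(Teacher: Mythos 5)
Your choice of source problem (PIBIS$(G,3,l)$ on bipartite graphs) and your deadline $T=l/s$ are exactly what the paper uses, and you correctly identify the crux: neutralizing $M_4$. However, of the two mechanisms you offer for this, the one you develop concretely --- padding with dummy jobs compatible with everything and balancing capacities --- does not work, and the one the paper actually uses is the one you mention only in passing: simply take $s_4 \ll 1/n$, so that even a \emph{single} job on $M_4$ needs time $1/s_4 > n \geq l$ and hence $M_4$ has capacity zero within the deadline. With that choice no gadget is needed at all: a schedule of length $\leq l$ exists iff $V(G)$ partitions into three independent sets of size $\leq l$, which is the whole reduction (plus the routine remark that the decision version lies in NP).

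The dummy-padding variant has a genuine flaw in your condition (ii). If the dummies are compatible with all real jobs, nothing pins them to $M_4$: in any feasible schedule you may swap a dummy sitting on $M_4$ with a real job sitting on a fast machine (the dummy fits anywhere, and the real job is admissible on $M_4$ as long as $M_4$'s real load remains independent in $G$). Consequently a PIBIS no-instance can still admit a schedule of length $\leq T$, namely whenever $V(G)$ partitions into \emph{four} bounded independent sets of sizes $\leq l,l,l,d$ --- capacity arithmetic relating $s_4$, $T$ and the number of dummies cannot exclude this, because equal-speed machines cannot distinguish which jobs they receive. The natural repair, making dummies incompatible with all real vertices so they cannot leave $M_4$, destroys bipartiteness: a dummy adjacent to both endpoints of any edge of $G$ creates a triangle-like odd cycle, so the constructed graph leaves the promised class. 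The only clean way out is to shrink $M_4$'s capacity to zero, i.e.\ the slow-machine option; your proof should commit to that and discard the dummies.
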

\begin{proof}
We prove by reduction from the PIBIS$(G,3,l)$ problem. Suppose we have an instance of PIBIS$(G,3,l)$, i.e. we have a bipartite graph $G$ and we want to know whether there exists a partition 
of its vertices into three independent sets, each of size $\leq l$. We construct the following instance of a scheduling decision problem: machine speeds for $M_1, M_2,$ and $M_3$ are 
$s_1=s_2=s_3=1$. Machine $M_4$ is of speed $s_4 \ll 1/n$ and the limit on schedule length is $l$. The question is whether there exists a schedule of length at most $l$. The membership of 
this problem in class NP is obvious.

The existence of a schedule of length $\leq l$ implies the existence of a 3-partition of $G$ into independent sets of size at most $l$, since no job can be allocated to $M_4$.

If $G$ has a 3-coloring with at most $l$ vertices in each color then our scheduling problem has clearly a solution of length at most $l$, since each color class can be regarded as a load 
on some $M_i$, $i \leq 3$.

The NP-hardness of $Q4|p_i=1,G=bipartite|C_{\max}$ follows from the fact that its decision version is NP-complete.
\end{proof}

\section{Algorithm for the case $s_2=s_3=s_4$}\label{alg}
Since our scheduling problem is NP-hard, we have to propose an approximation algorithm for it. First of all notice that if all the machines are identical then the scheduling problem 
becomes trivial since any equitable 4-coloring of $G$ solves the problem to optimality. Therefore we assume herein that $s_1 \gg s_2=s_3=s_4$ and the incompatibility graph $G$ is of degree 
at most 4.

In the following we will need the concept of an ideal schedule. Let $s = s_1+s_2+s_3+s_4$. A schedule in which all the machines finish at the same time is said to be \emph{ideal}. Note that the 
length of the ideal schedule is $n/s$. Since the number of jobs on each machine must be an integer, the ideal schedule need not be optimal. Such a situation is illustrated in Fig~\ref{fig2}.

The general idea behind our heuristics is to find a semi-equitable coloring in which the largest possible independent set in $G$ is allocated to machine $M_1$ and the remaining job vertices 
are spread equitably within machines $M_2, M_3,$ and $M_4$. This leads to the following \texttt{Algorithm 1} for optimal/suboptimal scheduling in this case.

\begin{algorithm}
\caption{Scheduling in case $s_2=s_3=s_4$}
\begin{algorithmic}
\Require {$n$-vertex graph $G$ with $\Delta(G) \leq 4$ and machine speeds $s_1\gg s_2=s_3=s_4$.}
\Ensure {Optimal/suboptimal schedule.}
\begin{enumerate}
\item Find a maximum independent set $I_1$ in $G$.
\item If $I_1$ contains 6 vertices from the neighborhood of $K_{3,3}$ then set $I_1 = I_1 \cup \{v\} - \{u\}$ as shown in Fig.~\ref{fig3}.
\item Assign $M_1 \leftarrow I_1$.
\item Let $G_1,G_2,\ldots,G_k$ be the set of connected components of $G - I_1$.
\item For each $i = 1,\ldots,k$ find an equitable $(A_i,B_i,C_i)$-coloring of $G_i$, where $|A_i| \leq |B_i| \leq |C_i| \leq |A_i|+1$.
\item Combine the corresponding independent sets to get an equitable coloring $(X,B,Y)$ of the whole $G - I_1$, where $B = B_1\cup B_2\cup \cdots \cup B_k$.
\item Assign $M_2 \leftarrow X$, $M_3 \leftarrow B$, $M_4 \leftarrow Y$.
\end{enumerate}
\end{algorithmic}
\end{algorithm}

\begin{figure}
\begin{center}
\includegraphics{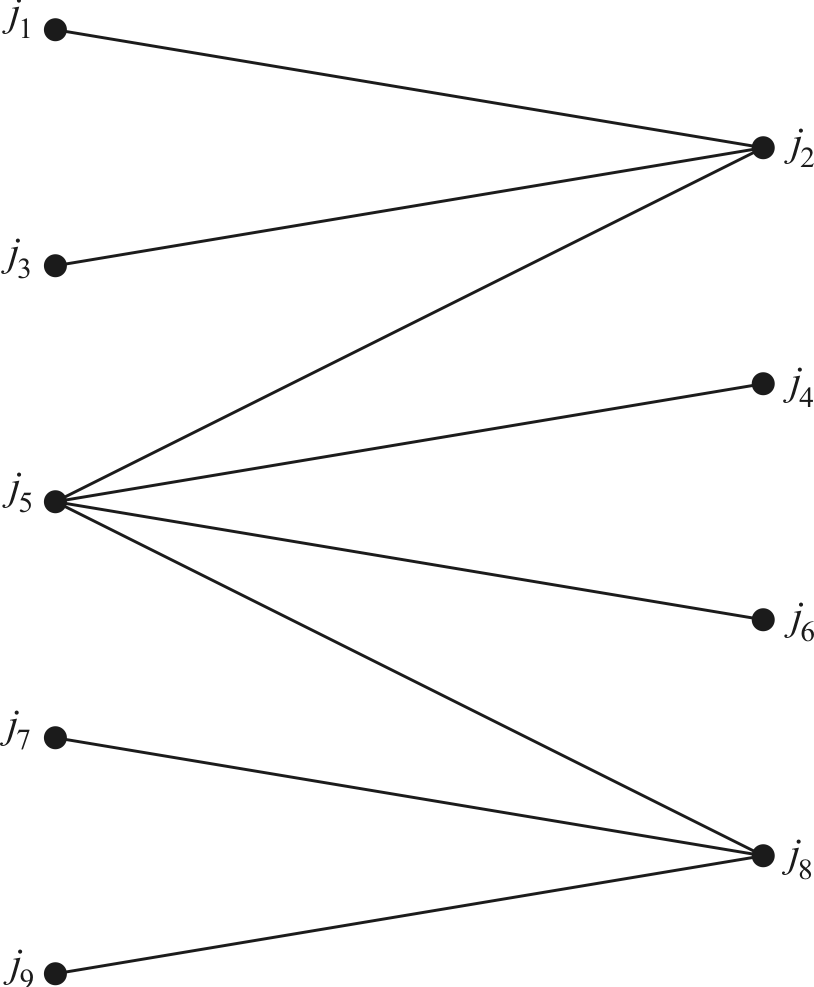} 
\caption{Example of incompatibility graph.} \label{fig1}
\end{center}
\end{figure}

\begin{figure}
\begin{center}
\includegraphics[scale=0.9]{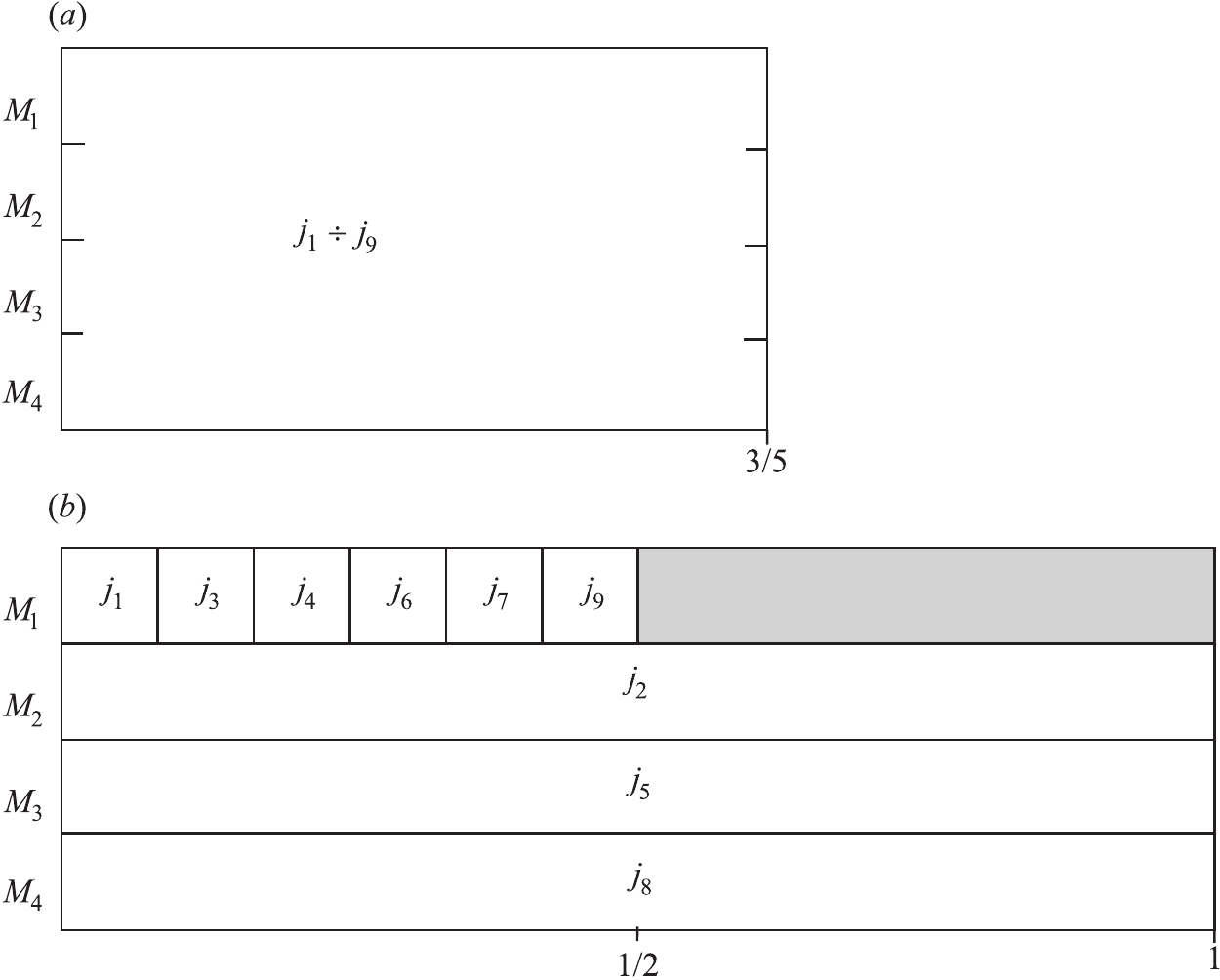} 
\caption{Gantt chart of a schedule for graph of Fig.~\ref{fig1} when $s_1=12$, $s_2=s_3=s_4=1$: (a) ideal; (b) optimal.} \label{fig2}
\end{center}
\end{figure}

The most time-consuming Step 1 of \texttt{Algorithm 1} can be done in $O(n^{1.5})$ time \cite{hop}. 

\begin{figure}
\begin{center}
\includegraphics{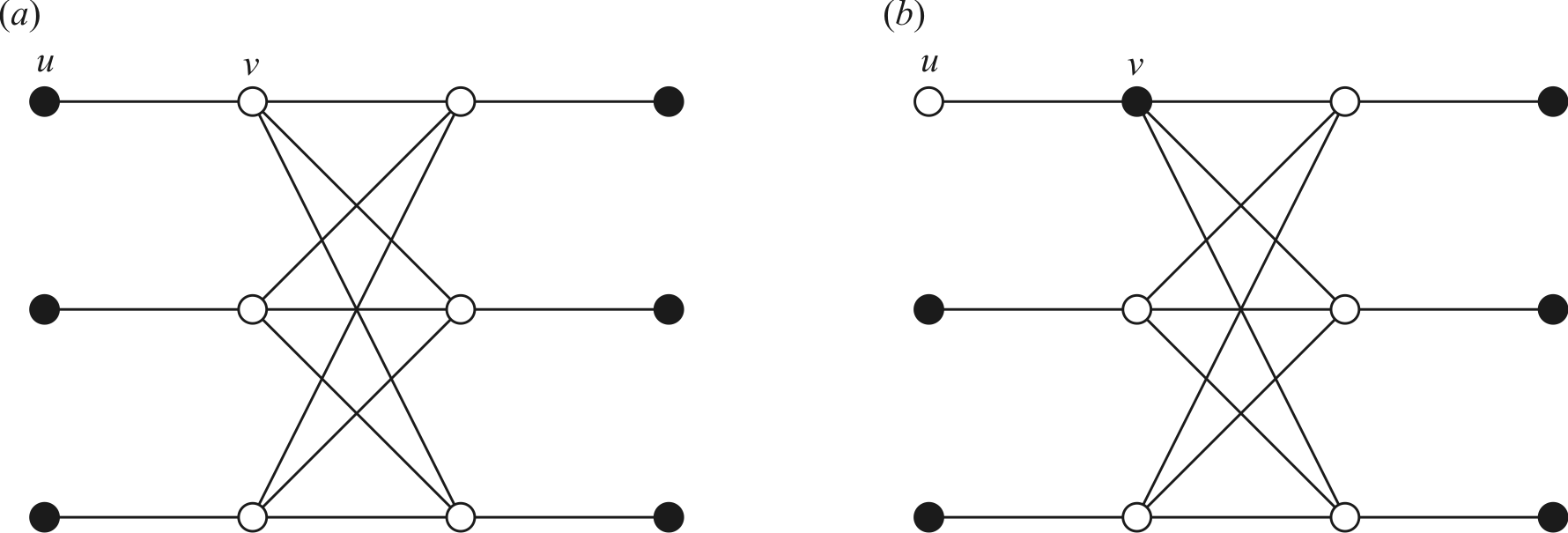} 
\caption{$K_{3,3}$ and its neighborhood (vertices in black belong to $I_1$): (a) before interchange; (b) after interchange.} \label{fig3}
\end{center}
\end{figure}

\begin{theorem}
If $s_1 \geq 12s_2$ and $s_2=s_3=s_4$ then \emph{\texttt{Algorithm 1}} returns an optimal solution. \label{th2}
\end{theorem}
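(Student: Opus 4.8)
The plan is to compute the makespan produced by \texttt{Algorithm 1} and then match it with a lower bound valid for every feasible schedule. Normalize the speeds so that $s_2=s_3=s_4=1$ and $s_1\ge 12$. After Steps 1--3 the fast machine carries a maximum independent set, so $C(M_1)=\alpha(G)/s_1$; note that the interchange in Step 2 replaces one vertex by another and hence leaves $|I_1|=\alpha(G)$ unchanged. Steps 4--7 distribute the remaining $n-\alpha(G)$ vertices over the three unit-speed machines through an equitable $3$-coloring, so that the heaviest of them carries $\lceil (n-\alpha(G))/3\rceil$ jobs. Thus $C_{\max}^{\mathrm{ALG}}=\max\{\alpha(G)/s_1,\ \lceil (n-\alpha(G))/3\rceil\}$, and the first task is to argue that such a balanced $3$-coloring really exists and attains exactly the value $\lceil (n-\alpha(G))/3\rceil$.

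To identify the bottleneck I would invoke the right inequality of (\ref{ine}). From $\alpha(G)\le 4n/5$ and $s_1\ge 12$ one gets $\alpha(G)/s_1\le\alpha(G)/12\le (4n/5)/12=n/15$, while $\lceil (n-\alpha(G))/3\rceil\ge (n-\alpha(G))/3$; more precisely $\alpha(G)/12\le (n-\alpha(G))/3$ is equivalent to $5\alpha(G)\le 4n$, which is exactly (\ref{ine}). Hence $\alpha(G)/s_1\le\lceil (n-\alpha(G))/3\rceil$, so $C_{\max}^{\mathrm{ALG}}=\lceil (n-\alpha(G))/3\rceil$ and the makespan is governed by the three slow machines. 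For the matching lower bound, consider any feasible partition $(I_1^\ast,I_2^\ast,I_3^\ast,I_4^\ast)$. Whatever independent set is placed on $M_1$ has size at most $\alpha(G)$, so the three unit-speed machines together receive at least $n-\alpha(G)$ jobs and the largest of them carries at least $\lceil (n-\alpha(G))/3\rceil$; as these machines have speed $1$ this gives $C_{\max}\ge\lceil (n-\alpha(G))/3\rceil=C_{\max}^{\mathrm{ALG}}$, which is the desired optimality.

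The real content, and the step I expect to be hardest, is the existence claim used above: that $G-I_1$ admits an equitable $3$-coloring whose largest class has size $\lceil (n-\alpha(G))/3\rceil$. Here I would first exploit maximality of $I_1$: every vertex outside $I_1$ is adjacent to some vertex of $I_1$, so it has at most $\Delta(G)-1$ neighbours inside $G-I_1$, whence $\Delta(G-I_1)\le 3$. By the theorem of Chen and Yen \cite{chen}, each connected component is then equitably $3$-colorable unless it is a copy of $K_{3,3}$ (the case $q=1$), and $K_{3,3}$ is precisely the configuration eliminated by the interchange of Step 2, illustrated in Fig.~\ref{fig3}; components with $\Delta\le 2$ are handled directly. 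The remaining obstacle is Step 6, the reassembly of the per-component triples $|A_i|\le|B_i|\le|C_i|\le|A_i|+1$ into a globally balanced triple $(X,B,Y)$.

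I would treat this last point by a balancing argument: split each component size $n_i$ into a common base $\lfloor n_i/3\rfloor$ placed on all three machines plus a residue $r_i\in\{0,1,2\}$, so that a component contributes one extra unit (for $r_i=1$) or two extra units on distinct machines (for $r_i=2$), to be assigned freely. Because the bases contribute equally to $X$, $B$ and $Y$, it suffices to distribute the $\sum_i r_i$ residual units so that the three machine totals differ by at most one, which the flexibility of the residues permits. The heaviest class then receives $\lfloor (n-\alpha(G))/3\rfloor$ or $\lceil (n-\alpha(G))/3\rceil$ jobs, i.e.\ exactly $\lceil (n-\alpha(G))/3\rceil$, closing the gap between $C_{\max}^{\mathrm{ALG}}$ and the lower bound and completing the proof.
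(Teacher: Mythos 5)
Your proposal is correct and follows essentially the same route as the paper's proof: place a maximum independent set $I_1$ on $M_1$, bound $C(M_1)$ via inequality (\ref{ine}), use maximality of $I_1$ both to get $\Delta(G-I_1)\leq 3$ and to derive the lower bound on any feasible schedule, and invoke Chen--Yen for the equitable $3$-coloring of the $K_{3,3}$-free subcubic remainder. The only differences are matters of presentation: you compare $C(M_1)$ directly with the integer lower bound $\lceil (n-\alpha(G))/3\rceil$ instead of the paper's ideal schedule length $n/s$, and you spell out the per-component residue-balancing step and the maximality argument that the paper compresses into single sentences (the latter via K\"onig's theorem).
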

\begin{proof}
Let $G$ be an $n$-vertex incompatibility graph $G$ of degree $\Delta \leq 4$ and let $I_1$ be a maximum cardinality independent set in $G$. If $I_1$ contains 6 vertices from  the 
neighborhood of $K_{3,3}$ like in Fig.~\ref{fig3} then interchange one of them, say $u$, with its neighbor $v$ belonging to $K_{3,3}$, i.e. set $I_1 := I_1 \cup \{v\}  - \{u\}$. First, we will show that $G  -  I_1$ is of 
degree at most 3. If $\Delta(G) \leq 3$, there is nothing to prove.  So suppose that $\Delta(G) = 4$ and let $v$ be any vertex of degree 4. If each such $v \in I_1$ then $\Delta(G  -  I_1) \leq 3$. If $v \not \in I_1$ then at 
least one of its neighbors, say $u$, belongs to $I_1$ since otherwise edge $\{u,v\}$ would not be covered by a minimal vertex cover and, due to K\"{o}nig's theorem 
\cite{konig}, $I_1$ would not be maximal.

Let $C(M_i)$ be the schedule length on machine $M_i$, $i = 1,\ldots,4$. Without loss of generality we may assume that $s_1= 12s_2$ . Then the ideal schedule is of length $n/s = \frac{1}{15}n/s_2$. By inequality 
(\ref{ine}) it follows that $|I_1|\leq \frac{4}{5}n$. Thus $C(M_1) \leq \frac{4}{5}n/(12s_2) = \frac{1}{15}n/s_2$, which is equal to the ideal schedule length. Since $G - I_1$ is a collection of subcubic bipartite graphs different from $K_{3,3}$, 
we can find an optimal scheduling on $M_2, M_3$, and $M_4$ by equitable 3-coloring of $G  - I_1$. In this way the remaining jobs are spread evenly among the three machines $M_i$, $i \geq 2$ which gives 
$\max\{C(M_i): i = 2,3,4\} = C^*_{\max}$, because one cannot do better by moving a job from $M_i$ to $M_1$ as $I_1$ is maximal. This completes the proof of Theorem \ref{th2}.
\end{proof}

\begin{corollary}
If $s_1 \geq 2s_2$ and $s_2=s_3=s_4$ then \emph{\texttt{Algorithm 1}} returns a solution of value at most 2 times $C^*_{\max}$. 
\end{corollary}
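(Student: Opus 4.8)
The plan is to reuse the entire structural analysis from the proof of Theorem~\ref{th2} and change only the final accounting. Normalize the speeds so that $s_2=s_3=s_4=1$; the hypothesis then reads $s_1\ge 2$. Write $a=|I_1|=\alpha(G)$ for the size of the maximum independent set placed on $M_1$. Exactly as in the proof of Theorem~\ref{th2}, the residual graph $G-I_1$ is a subcubic bipartite graph different from $K_{3,3}$, so by Chen and Yen's theorem \cite{chen} it admits an equitable $3$-coloring; hence Steps 5--7 load $M_2,M_3,M_4$ with classes whose sizes differ by at most one, and the larger makespan among the three unit-speed machines is $\lceil (n-a)/3\rceil$. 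Consequently the value returned by the algorithm is
\begin{equation*}
C_{\max}=\max\Bigl\{\tfrac{a}{s_1},\ \bigl\lceil\tfrac{n-a}{3}\bigr\rceil\Bigr\}.
\end{equation*}

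Next I would isolate two lower bounds on the optimum $C^*_{\max}$, one tailored to each group of machines. The first is the volume (ideal-schedule) bound: since the total work is $n$ and the total speed is $s=s_1+3$, any schedule satisfies $C^*_{\max}\ge n/(s_1+3)$. The second exploits the independence constraint: machine $M_1$ can hold at most $\alpha(G)=a$ jobs, so the three unit-speed machines must jointly process at least $n-a$ jobs, whence $C^*_{\max}\ge\lceil (n-a)/3\rceil$. The crucial observation — and the one place where care is needed — is that $a/s_1$ is \emph{not} itself a valid lower bound on $C^*_{\max}$, because an optimal schedule is free to under-load $M_1$; this is precisely why optimality is lost once $s_1<12s_2$ and why the slower-growing volume bound must be substituted for it.

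It then remains to combine the pieces. For the fast machine I would show $\tfrac{a}{s_1}\le \tfrac{2n}{s_1+3}$: using $a\le \tfrac{4}{5}n$ from (\ref{ine}), this inequality reduces to $4(s_1+3)\le 10\,s_1$, i.e. to $s_1\ge 2$, which holds by hypothesis; hence $a/s_1\le 2C^*_{\max}$. For the slow machines the second lower bound gives $\lceil (n-a)/3\rceil\le C^*_{\max}$ outright. Therefore
\begin{equation*}
C_{\max}=\max\Bigl\{\tfrac{a}{s_1},\ \bigl\lceil\tfrac{n-a}{3}\bigr\rceil\Bigr\}\le\max\{2C^*_{\max},\,C^*_{\max}\}=2C^*_{\max},
\end{equation*}
which is the claimed bound. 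The only genuine obstacle is the choice of lower bounds: once one recognizes that the volume bound must replace the naive $a/s_1$ estimate, the factor $2$ falls out of the worst-case independence ratio $\alpha(G)\le 4n/5$ together with $s_1\ge 2s_2$, and the resulting guarantee is tight at $s_1=2s_2$.
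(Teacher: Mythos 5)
Your proof is correct, and it takes a genuinely different route from the paper's. The paper's own argument normalizes ``without loss of generality'' to $s_1=2s_2$, measures everything against twice the ideal (volume) lower bound $n/s\le C^*_{\max}$, and then verifies only the two extremal values $|I_1|=\lceil n/2\rceil$ and $|I_1|=\lfloor 4n/5\rfloor$ permitted by inequality (\ref{ine}), explicitly leaving ``the remaining cases'' to the reader. You instead keep $s_1\ge 2s_2$ general and split the lower bounds by machine group: the volume bound $C^*_{\max}\ge n/(s_1+3)$ controls $M_1$ (where $a\le 4n/5$ and $s_1\ge 2$ yield exactly the factor $2$), while your independence/pigeonhole bound $C^*_{\max}\ge\lceil (n-a)/3\rceil$ --- no schedule can place more than $\alpha(G)=a$ jobs on $M_1$, so some unit-speed machine receives at least $\lceil (n-a)/3\rceil$ jobs --- controls $M_2,M_3,M_4$ with no case analysis at all. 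This buys two things the paper's proof lacks. First, uniformity in $a$: a single inequality covers every value of $|I_1|$, rather than two endpoints plus an unverified remark. Second, uniformity in $s_1$: the paper's reduction to $s_1=2s_2$ is not actually innocuous, since lowering $s_1$ changes both the algorithm's makespan and $C^*_{\max}$, so a ratio bound proved at $s_1=2s_2$ does not formally transfer to $s_1>2s_2$; moreover, for large $s_1$ the volume bound alone cannot control the slow machines (it tends to $0$ while $\lceil(n-a)/3\rceil$ does not), so your substitution of the independence bound there is precisely what closes that gap --- it even shows the slow machines are within a factor $1$, not $2$, of optimal. What the paper's approach buys in exchange is only shorter arithmetic and no need to introduce the second lower bound. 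Your tightness remark at $s_1=2s_2$ is consistent with the paper's worst-case instance $G=3K_{1,4}$ shown in Fig.~\ref{fig5}.
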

\begin{proof}
Without loss of generality we may assume that $s_1 = 2s_2$. In this case the ideal schedule length is $n/s = n/(5s_2) = \frac{2}{5}n/s_1$. Let us consider two extremal 
cases given in inequality (\ref{ine}). 
\begin{description}
\item[\textnormal{\emph{Case} 1:}] $|I_1| =\lceil n/2 \rceil$.

\texttt{Algorithm 1} returns a solution on $M_1$ of length $C(M_1) =\lceil n/2 \rceil /s_1$ which is less than $\frac{4}{5}n/s_1$, i.e. twice the ideal 
schedule length. The remaining jobs are spread evenly among three machines $M_i$, $i \geq 2$, which gives $C(M_i) \cong \lfloor n/2 \rfloor /(s_2+s_3+s_4) =
\lfloor n/2 \rfloor /(1.5s_1) = \frac{2}{3} \lfloor n/2 \rfloor/s_1$ which is less than $\frac{4}{5}n/s_1$, i.e. twice the ideal schedule length. 
The thesis holds in Case 1.

\item[\textnormal{\emph{Case} 2:}] $|I_1| =\lfloor 4n/5 \rfloor$.

Then $C(M_1) =\lfloor 4n/5\rfloor /s_1$ which is  less than $\frac{4}{5}n/s_1$, i.e. twice the ideal schedule length. The remaining jobs are spread evenly among 
$M_i$, $i \geq 2$, which gives $C(M_i) \cong \lfloor n/5 \rfloor/(3s_2) = 2\lfloor n/5 \rfloor /(1.5s_1) = \frac{2}{3} \lfloor n/5\rfloor /s_1 < \frac{4}{3}\lfloor 
n/5\rfloor /s_1 < 2C(M_1)$. The thesis holds in Case 2.
\end{description}

The reader can check that the thesis holds in the remaining cases as well.	
\end{proof}

The worst-case instance for \texttt{Algorithm 1} when $s_1 = 2s_2$ and $G =3K_{1,4}$ is shown in Fig.~\ref{fig5}.

\begin{figure}
\begin{center}
\includegraphics{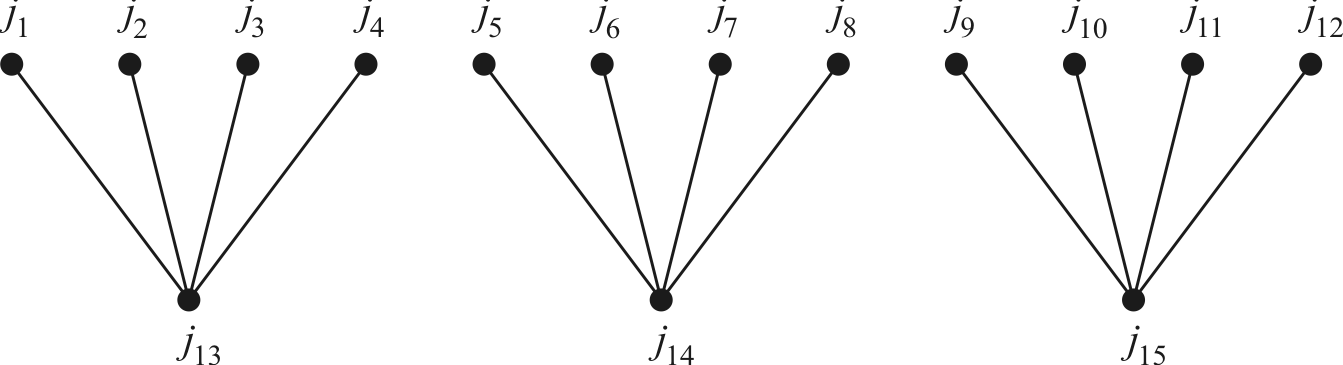} 
\vspace{0.8cm}
\caption{Example of incompatibility graph $G = 3K_{1,4}$.} \label{fig4}
\end{center}
\end{figure}

\begin{figure}
\begin{center}
\includegraphics{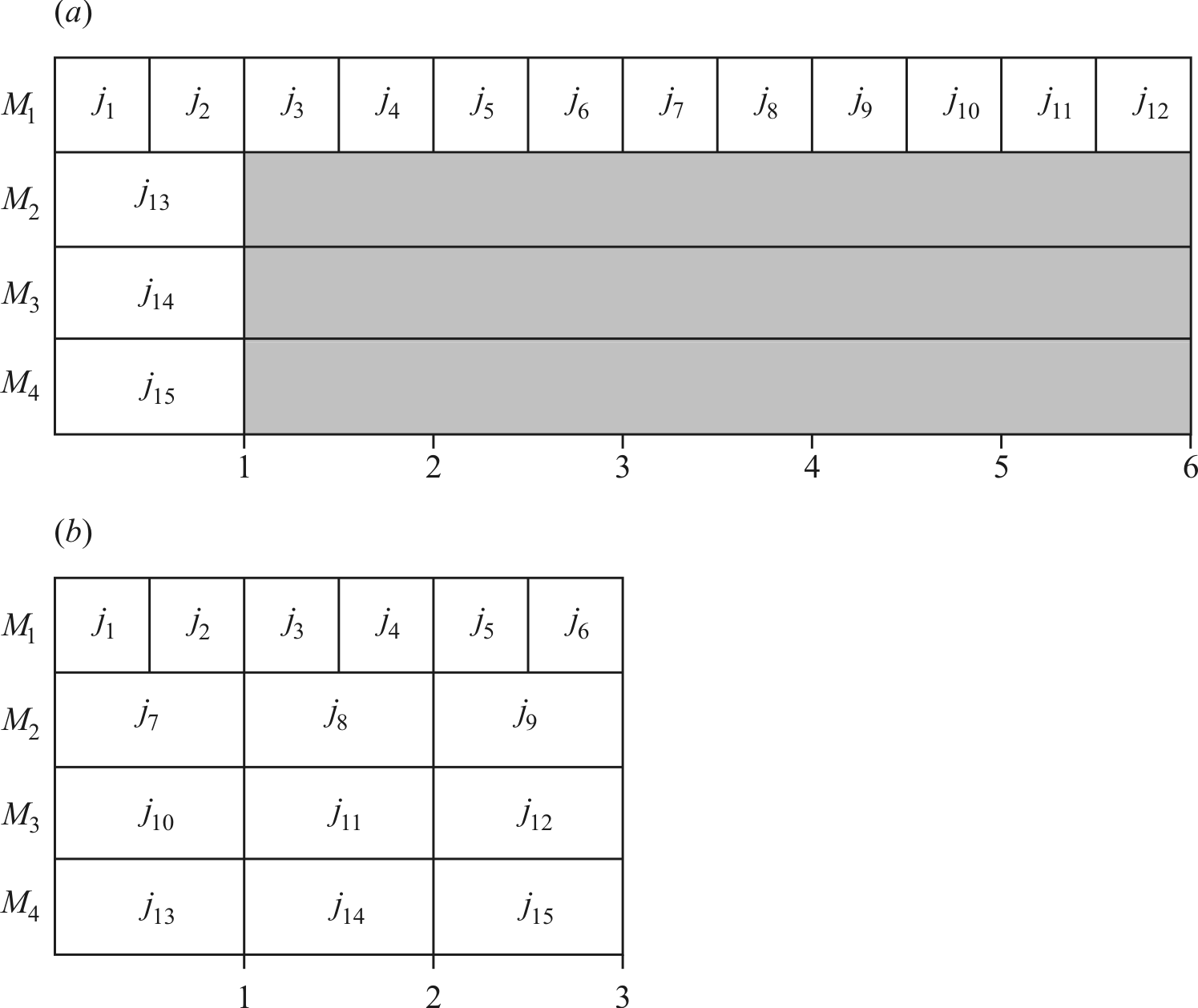} 
\caption{Gantt chart of a schedule for \texttt{Algorithm 1} with graph of Fig.~\ref{fig4} when $s_1 = 2s_i$, $2 \leq i \leq 4$ (a) worst-case schedule; (b) optimal schedule.} \label{fig5}
\end{center}
\end{figure}

\section{Algorithms for the case $s_3=s_4$}\label{alg34}

In this case we have two fast machines $M_1, M_2$,  and two slow machines $M_3, M_4$. Therefore it is reasonable to apply a maximum independent set algorithm 
twice: first towards $G$, which results in a set $I_1$, and then towards $G - I_1$. This idea leads us to an approximation \texttt{Algorithm 2}.

\begin{algorithm}
\caption{Scheduling in case $s_3=s_4$}
\begin{algorithmic}
\Require {$n$-vertex graph $G$ with $\Delta(G) \leq 4$ and machine speeds $s_1 \geq s_2 \geq s_3=s_4$.}
\Ensure {Suboptimal schedule.}
\begin{enumerate}
\item Find $I_1$ a maximum independent set in $G$.
\item Find $I_2$ a maximum independent set in $G - I_1$.
\item Find an equitable $(C,D)$-coloring of $G - I_1 - I_2$.
\item Assign $M_1 \leftarrow I_1, M_2\leftarrow I_2, M_3\leftarrow C, M_4\leftarrow D$.
\end{enumerate}
\end{algorithmic}
\end{algorithm}

\begin{lemma}
If $s_1\geq s_2 \geq 3s_3=3s_4$ then \emph{\texttt{Algorithm 2}} runs in time $O(n^{1.5})$ to find a solution of value at most $32/15$ times $C^*_{\max}$. 
\end{lemma}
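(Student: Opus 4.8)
The plan is to bound the running time directly from the two subroutines and then to compare the makespan produced on each machine against two lower bounds on $C^*_{\max}$. For the running time, Steps~1 and~2 each invoke a maximum-independent-set computation on a bipartite graph, which costs $O(n^{1.5})$ by Hopcroft and Karp \cite{hop}, while the equitable $(C,D)$-coloring in Step~3 is obtained in linear time by a \texttt{DFS} 2-coloring; hence the total cost is $O(n^{1.5})$. For the quality guarantee I would rely on two lower bounds: the ideal-schedule bound $C^*_{\max} \geq n/s$ with $s = s_1+s_2+s_3+s_4$, and a residual-load bound obtained by noting that in any schedule $M_1$ carries at most $\alpha(G)$ jobs, so at least $n-\alpha(G)$ jobs run on $M_2,M_3,M_4$ and therefore $C^*_{\max} \geq (n-\alpha(G))/(s_2+s_3+s_4)$.

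Next I would estimate each $C(M_i)$ separately. For the fastest machine, $|I_1|=\alpha(G)\leq 4n/5$ by (\ref{ine}), so $C(M_1)\leq \frac{4}{5}\,\frac{s}{s_1}\cdot\frac{n}{s}$; since $s_1\geq s_2\geq 3s_3=3s_4$, the factor $s/s_1=(s_1+s_2+s_3+s_4)/s_1$ is largest exactly when $s_1=s_2=3s_3$, where it equals $8/3$, giving $C(M_1)\leq \frac{32}{15}\,\frac{n}{s}\leq \frac{32}{15}C^*_{\max}$. This is the binding case and it fixes the constant $32/15$. For the second machine, $I_2$ is an independent set of $G-I_1$, so $|I_2|\leq n-\alpha(G)$, and the residual-load bound yields $C(M_2)=|I_2|/s_2\leq (n-\alpha(G))/s_2\leq \frac{s_2+s_3+s_4}{s_2}\,C^*_{\max}=(1+2s_3/s_2)\,C^*_{\max}\leq \frac{5}{3}C^*_{\max}$, using $s_2\geq 3s_3$. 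Thus both fast machines respect the $32/15$ factor.

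The main obstacle is the pair of slow machines $M_3,M_4$. The key structural point is that $G-I_1$ is bipartite, so $|I_2|=\alpha(G-I_1)\geq (n-\alpha(G))/2$; hence the number of jobs spilling onto the slow machines is $m'=|G-I_1-I_2|\leq (n-\alpha(G))/2$, and the equitable split gives $|C|,|D|\leq \lceil m'/2\rceil\leq \lceil (n-\alpha(G))/4\rceil$. Feeding this into the residual-load bound gives $C(M_3)=|C|/s_3\leq \frac{s_2+s_3+s_4}{4s_3}\,C^*_{\max}$, which stays below $32/15$ as long as $s_2$ is not too large relative to $s_3$ (at the extreme $s_2=3s_3$ it is merely $5/4$); for the remaining range one instead invokes the ideal bound together with $n-\alpha(G)\leq n/2$, so that $C(M_3)\leq \frac{s}{8s_3}\cdot\frac{n}{s}$. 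The delicate part of the argument is to combine these two lower bounds over the whole admissible range $3s_3\leq s_2\leq s_1$ so that the slow-machine estimate never exceeds the $32/15$ factor already forced by $M_1$: concretely, one must verify that whenever the fast machines are much quicker than the slow ones the residual set $m'$ is small enough that $M_3,M_4$ cannot become the bottleneck, the tight instance being $s_1=s_2=3s_3=3s_4$ together with $\alpha(G)=4n/5$.
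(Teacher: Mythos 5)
Your handling of the running time and of the two fast machines is sound: the Hopcroft--Karp argument is the paper's own, the estimate $C(M_1)\leq \frac{4}{5}\cdot\frac{s}{s_1}\cdot\frac{n}{s}\leq\frac{32}{15}\,C^*_{\max}$ is correct because $s/s_1\leq 8/3$ on the admissible range, and your residual-load bound $C^*_{\max}\geq (n-\alpha(G))/(s_2+s_3+s_4)$ (which the paper does not use) disposes of $M_2$ cleanly for every $s_2\geq 3s_3$. The genuine gap is exactly the step you flag as ``delicate'' and leave unproven: the claim that your two lower bounds can be combined so that $M_3,M_4$ never exceed the factor $32/15$ over the whole range $3s_3\leq s_2\leq s_1$. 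That step does not merely need more care --- it fails. The coloring $[|I_1|,|I_2|,|C|,|D|]$ produced by \texttt{Algorithm 2} is independent of the speeds, so $C(M_3)=|C|/s_3$ does not shrink as $s_1,s_2$ grow with $s_3=s_4$ fixed, while both of your lower bounds on $C^*_{\max}$ (ideal and residual) scale like $1/(s_1+s_2)$ in that regime; whenever $G-I_1$ contains an edge we have $|C|\geq 1$ and the quotient of $C(M_3)$ by either bound tends to infinity. The paper's own instance of Fig.~\ref{2col} realizes this concretely: \texttt{Algorithm 2} outputs a coloring of type $[6,1,1,0]$ although $[4,4,0,0]$ is feasible, so taking $s_1=s_2=9s_3$ (admissible, since $9s_3\geq 3s_3$) gives a ratio of at least $(1/s_3)\big/\bigl(4/s_1\bigr)=9/4>32/15$. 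The slow machines do become the bottleneck, and no combination of these (or any) lower bounds can close your case analysis on the full stated range.

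For comparison, the paper sidesteps this by asserting ``without loss of generality $s_1=s_2=3s_3$'' and verifying the bound only there: the resulting colorings range from $[\lfloor 4n/5\rfloor,\lceil n/5\rceil,0,0]$ to $[\lceil n/2\rceil,\lceil n/4\rceil,\lceil n/8\rceil,\lfloor n/8\rfloor]$, the schedule length is at most $\lfloor 4n/5\rfloor/s_1$, and dividing by the ideal length $n/(8s_3)$ gives $32/15$. Your argument also closes at that point (your $M_3$ estimate becomes $5/4<32/15$ when $s_2=3s_3$), so restricted to $s_1=s_2=3s_3$ your proof is complete and essentially equivalent to the paper's, with the added merit that your $M_1$ and $M_2$ bounds are proved for general admissible speeds. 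But the normalization is precisely the reduction your missing step would have to justify, and the Fig.~\ref{2col} instance shows it is not a harmless normalization: the worst case over the range is not attained at $s_1=s_2=3s_3$. In short, your instinct about where the difficulty lies is correct, and the gap in your write-up is real and cannot be repaired as the statement is phrased.
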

\begin{proof}
The complexity of \texttt{Algorithm 2} is obvious since both Steps 1 and 2 can be done in time $O(n^{1.5})$ \cite{hop} while Step 3 is linear.

Let us consider the accuracy of \texttt{Algorithm 2}. For this reason we may assume, without loss of generality, that $s_1=s_2=3s_3$. In this case the ideal schedule length 
is $n/(8s_3)$. Note that the subgraph $G - I_1$ is of degree at most 3. This subgraph may be connected or disconnected. Its order is between $\lfloor n/5\rfloor$ and $\lceil n/2 \rceil$.
Since it is bipartite, we have $\lceil n/10\rceil \leq \alpha(G - I_1) \leq \lfloor n/2 \rfloor$.  Let $I_2$ be a maximum cardinality independent set in $G - I_1$. 
The subgraph $G - I_1 - I_2$ can be equitably colored with 2 colors, since $\Delta(G-I_1-I_2) \leq 2$. In this way we can get a 4-coloring 
ranging from $[\lfloor 4n/5\rfloor, \lceil n/5\rceil, 0, 0]$ to $[\lceil n/2\rceil, \lceil n/4 \rceil, \lceil n/8 \rceil, \lfloor n/8 \rfloor]$. Therefore the schedule length is at most $\lfloor 4n/5 \rfloor / s_1$. Thus
$$
\frac{\text{Alg}_2}{C^*_{\max}} \leq \frac{\lfloor 4n/5 \rfloor /s_1}{n/(8s_3)} \leq \frac{4n/(5s_1)}{n/(8s_3)}=\frac{4/15}{1/8} = \frac{32}{15}.
$$
\end{proof}

In contrast to \texttt{Algorithm 1}, which guarantees an optimal solution to our scheduling problem if $s_1 \geq 12s_2$, no such a guarantee exists for \texttt{Algorithm 2}. In other words, 
there is no bound on $s_2/s_3$ which guarantees that \texttt{Algorithm 2} solves the problem to optimality. In fact, consider graph $G$ depicted in Fig. \ref{2col} and assume that 
$s_1=s_2$. Algorithm 2 when applied to $G$ finds a coloring of type $[6, 1, 1, 0]$, which leads to a schedule of length $\max\{6/s_1, 1/s_3\}$. A better coloring 
is $[4, 4, 0, 0]$ which results in the schedule length of value $4/s_2 = 4/s_1 < 6/s_1$, irrespective of $s_3$.

\begin{figure}
\begin{center}
\includegraphics{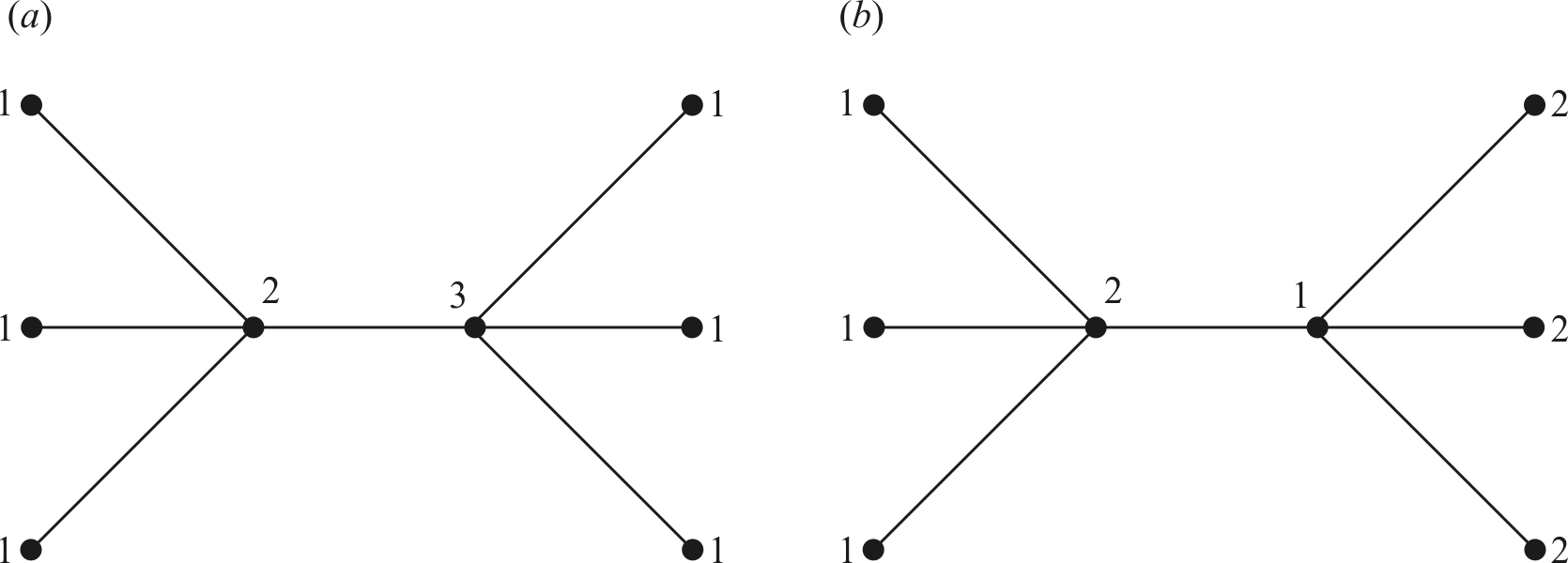} 
\caption{Graph $G$ with two colorings: (a) of type $[6, 1, 1, 0]$; (b) of type $[4, 4, 0, 0]$.} \label{2col}
\end{center}
\end{figure}

Now let us consider an approach based on equitable 4-coloring of $G$. The fact that every bipartite graph of degree $\leq 4$ is equitably 4-colorable was proved by 
Chen and Yen \cite{chen}. In this case we get a coloring of type $[\lceil n/4\rceil,\lceil(n - 1)/4\rceil, \lceil(n - 2)/4\rceil, \lceil (n - 3)/4\rceil]$. 
Hence the schedule length is determined by $C(M_3) = \lceil (n - 2)/4 \rceil/s_3$. The algorithm of this kind is presented as \texttt{Algorithm 3} below. 

\begin{algorithm}
\caption{Scheduling in case $s_3=s_4$}
\begin{algorithmic}
\Require {$n$-vertex graph $G$ with $\Delta(G) \leq 4$ and machine speeds $s_1 \geq s_2 \geq s_3=s_4$.}
\Ensure {Suboptimal schedule.}
\begin{enumerate}
\item Find an equitable $(A,B,C,D)$-coloring of $G$ by applying a procedure described in \cite{chen}.
\item Order the independent sets so that $|A| \geq |B| \geq |C| \geq |D| \geq |A|-1$.
\item Assign $M_1 \leftarrow A, M_2\leftarrow B, M_3\leftarrow C, M_4\leftarrow D$.
\end{enumerate}
\end{algorithmic}
\end{algorithm}
The worst-case ratio of \texttt{Algorithm 3} is bounded above by
$$\frac{C(M_3)}{C(M_2)} = \frac{\lceil(n - 2)/4\rceil /s_3}{\lceil(n - 1)/4\rceil/s_2} \cong \frac{s_2}{s_3},$$
which can be arbitrarily large if $s_3$ is constant and $s_2$ tends to infinity. We have the following Lemma \ref{lm2}.

\begin{lemma}
If $s_1\geq s_2$ and $s_2 \leq 3s_3=3s_4$ then \emph{\texttt{Algorithm 3}} runs in time $O(n)$ to find a solution of value at most $2C^*_{\max}$.\label{lm2}
\end{lemma}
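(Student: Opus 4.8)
The plan is to pair an upper bound on the makespan produced by \texttt{Algorithm 3} with the ideal-schedule lower bound on $C^*_{\max}$, exactly in the spirit of the preceding lemma. First I would dispose of the running time: every bipartite graph with $\Delta\le 4$ is equitably $4$-colourable in linear time by the procedure of Chen and Yen \cite{chen}, while Steps 2--3 (sorting four classes by size and assigning them to the four machines) cost only $O(n)$; hence the whole procedure is $O(n)$. For the quality of the solution, recall that the colouring returned in Step 1 has type $[\lceil n/4\rceil,\lceil (n-1)/4\rceil,\lceil (n-2)/4\rceil,\lceil (n-3)/4\rceil]$, so after the reordering of Step 2 and the assignment of Step 3 the makespan is governed by the slowest loaded machine, namely $C(M_3)=\lceil (n-2)/4\rceil/s_3$. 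Thus $\mathrm{Alg}_3\le \lceil (n-2)/4\rceil/s_3$, which I will treat as essentially $n/(4s_3)$ up to the usual rounding terms.

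Next I would produce a matching lower bound on the optimum. Since every unit job must be run somewhere and the total processing rate is $s=s_1+s_2+s_3+s_4$, no schedule can finish before the ideal length, giving $C^*_{\max}\ge n/s$. Normalising the speeds by $s_3=s_4=1$, the hypothesis $s_2\le 3s_3$ reads $s_2\le 3$; provided the fastest machine obeys the same order of magnitude, $s_1\le 3$ (I return to this below), one gets $s=s_1+s_2+2\le 8$ and hence $C^*_{\max}\ge n/8$. Dividing the two estimates yields
$$
\frac{\mathrm{Alg}_3}{C^*_{\max}}\;\le\;\frac{\lceil (n-2)/4\rceil/s_3}{n/8}\;\longrightarrow\;\frac{n/4}{n/8}=2,
$$
which is the claimed ratio; it is attained in the balanced extreme $s_1=s_2=3s_3$, where \texttt{Algorithm 3} loads $M_3$ with a full quarter of the jobs while the ideal schedule runs at total rate $8s_3$.

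The delicate point — and the step I expect to be the genuine obstacle — is precisely the denominator estimate $s\le 8s_3$. The hypotheses literally constrain only $s_2$ relative to $s_3$, so it is the control of the fastest machine $M_1$ that must be argued with care: the ideal bound $n/s$ degrades as $s_1$ grows, whereas $\mathrm{Alg}_3$ is completely insensitive to $s_1$ because Step 1 deliberately places only about $n/4$ jobs on $M_1$. One therefore has to show that within the regime under consideration $s_1$ is likewise of order $s_3$ (equivalently $s_1\le 3s_3$), so that $s_1+s_2\le 6s_3$ and $s\le 8s_3$; without such control the factor $2$ cannot hold, since a very fast $M_1$ lets the optimum offload a large independent set almost for free while \texttt{Algorithm 3} still pays $\approx n/(4s_3)$. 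Once $M_1$ is accounted for in this way, the two bounds slot together and the ratio $2$ follows immediately.
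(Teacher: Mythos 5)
Your strategy --- bound $\mathrm{Alg}_3$ by $C(M_3)=\lceil(n-2)/4\rceil/s_3$ and bound $C^*_{\max}$ from below by the ideal length $n/s$ --- is exactly the paper's, and the obstacle you isolate is genuine: nothing in the hypotheses bounds $s_1$ from above, so the estimate $s\le 8s_3$ that your (and the paper's) computation needs cannot be derived. The paper disposes of it with the sentence ``Without loss of generality we may assume that $s_1=s_2=3s_3=3s_4$,'' and that reduction is precisely the unjustified step you left open: increasing $s_1$ preserves the hypotheses and leaves $\mathrm{Alg}_3$ pinned at $\lceil(n-2)/4\rceil/s_3$ (the equitable coloring never places more than $\lceil n/4\rceil$ jobs on $M_1$), while the optimum keeps dropping. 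Concretely, take $G$ a perfect matching on $n$ vertices, $s_3=s_4=1$, $s_2=3$, $s_1\ge 5$: an optimal schedule puts one endpoint of every edge on $M_1$ and splits the remaining independent set over $M_2,M_3,M_4$ in ratio $3\!:\!1\!:\!1$, so $C^*_{\max}\le n/10+O(1)$, whereas $\mathrm{Alg}_3\ge (n-2)/4$; the ratio tends to $5/2>2$. So the lemma as stated is false without an additional hypothesis such as $s_1\le 3s_3$ (and the defect propagates to the $32/15$ claim for \texttt{Algorithm 4}). Your proposal correctly locates the flaw but, necessarily, cannot repair it; in that sense it is no further from a valid proof than the paper's own argument, and it is more honest about where the difficulty sits.

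There is, however, a second gap that you wave away and the paper does not: the rounding. Even granting $s_1=s_2=3s_3=3s_4$, your displayed chain is wrong for finite $n$, because the ideal bound alone only gives $\mathrm{Alg}_3/C^*_{\max}\le 8\lceil(n-2)/4\rceil/n$, which exceeds $2$ whenever $n\equiv 3 \pmod 4$ (for $n=7$ it equals $16/7$). This is exactly why the paper's proof is a case analysis on $n$: for even $n$ the ideal bound suffices, while for odd $n$ it uses integrality of the machine loads to sharpen the lower bound on $C^*_{\max}$ --- e.g.\ for $n=8k-1$, any schedule of length less than $k/s_3$ could process at most $2(3k-1)+2(k-1)=8k-4<n$ jobs, hence $C^*_{\max}\ge k/s_3$, which matches $C(M_3)=2k/s_3$ exactly. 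Your phrase ``up to the usual rounding terms'' hides precisely this part of the argument; without it you prove only a ratio of $2+o(1)$, not the claimed constant $2$.
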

\begin{proof}
Without loss of generality we may assume that $s_1=s_2=3s_3=3s_4$. In this case the length of ideal schedule is $n/(8s_3)$. In the following we consider two cases 
depending on the parity of $n$.
\begin{description}
\item[\textnormal{\emph{Case} 1:}] $n = 2k$.

We have $C(M_3) = \lceil (n - 2)/4 \rceil /s_3 = \lceil (k - 1)/2 \rceil /s_3 \leq \frac{1}{2}k/s_3 = 2n/(8s_3) \leq 2C^*_{\max}$.
\item[\textnormal{\emph{Case} 2:}] $n = 8k - 1, 8k+1, 8k+3, 8k+5$.

In this case the ideal schedule is not optimal, since its length is not an integer. The example of the schedule for the case $n = 8k - 1$ is shown in Fig.~\ref{fig6}.

If $n= 8k - 1$ then an optimal solution corresponds to a coloring of $G$ of type $[3k, 3k, k, k-1]$. Hence $C(M_3) = \lceil(8k - 3)/4\rceil/s_3 = 2k/s_3 = 2C^*_{\max}$.

If $n = 8k+1$ then an optimal solution corresponds to a coloring of $G$ of type $[3k+1, 3k, k, k]$. That is why 
$C(M_3) = \lceil(8k - 1)/4\rceil/s_3 = 2k/s_3 <2(k+\frac{1}{3})/s_3 = 2C^*_{\max}$.

If $n = 8k+3$ then an optimal solution corresponds to a coloring of $G$ of type $[3k+2, 3k+1, k, k]$. Thus $C(M_3) = \lceil(8k+1)/4\rceil/s_3 = (2k+1)/s_3 <2(k+\frac{2}{3})/s_3 = 2C^*_{\max}$.

If $n = 8k+5$ then an optimal solution corresponds to a coloring of $G$ of type $[3k+3, 3k+2, k, k]$. Therefore $C(M_3) = \lceil(8k+3)/4\rceil/s_3 = (2k+1)/s_3 
<2(k+1)/s_3 = 2C^*_{\max}$.
\end{description}
\end{proof}

\begin{figure}
\begin{center}
\includegraphics{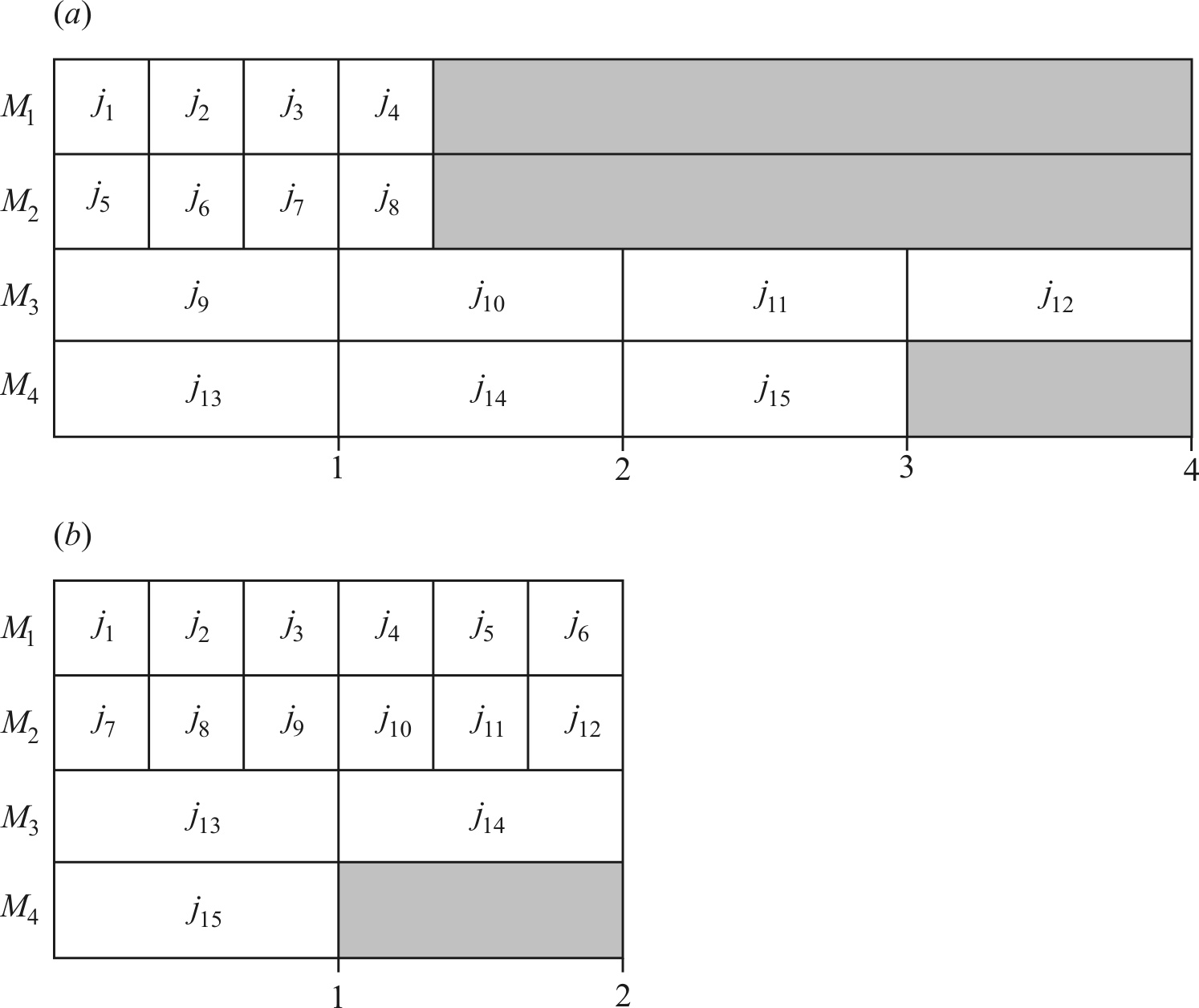} 
\caption{Gantt chart of a schedule for \texttt{Algorithm 3} with graph of Fig.~\ref{fig4} when $s_1=s_2=3s_3=3s_4$: (a) worst-case schedule; (b) optimal schedule.} \label{fig6}
\end{center}
\end{figure}

The above considerations lead us to the following universal algorithm
\begin{algorithm}
\caption{Scheduling in case $s_3=s_4$}
\begin{algorithmic}
\Require {$n$-vertex graph $G$ with $\Delta(G) \leq 4$ and machine speeds $s_1 \geq s_2 \geq s_3=s_4$.}
\Ensure {2-approximate schedule.}
\begin{enumerate}
\item If $s_2>3s_3$ then call \texttt{Algorithm 2} else call \texttt{Algorithm 3}.
\end{enumerate}
\end{algorithmic}
\end{algorithm}

\begin{theorem}
\emph{\texttt{Algorithm 4}} runs in time $O(n^{1.5})$ to produce  a solution of value at most $32/15$ times $C^*_{\max}$. \hfill $\Box$
\end{theorem}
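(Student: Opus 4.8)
The plan is to recognize that \texttt{Algorithm 4} is merely a dispatcher: its single instruction routes the input to \texttt{Algorithm 2} when $s_2 > 3s_3$ and to \texttt{Algorithm 3} otherwise. Consequently the theorem reduces entirely to verifying that this threshold test lines up exactly with the hypotheses of the two lemmas already established for those subroutines, so that in each branch the invoked subroutine meets its precondition. I would therefore argue by the two exhaustive cases determined by the branch taken.

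In the branch $s_2 > 3s_3$, \texttt{Algorithm 2} is called. The standing assumption of this section supplies $s_1 \geq s_2$ and $s_3 = s_4$, so the branch condition gives $s_1 \geq s_2 \geq 3s_3 = 3s_4$, which is precisely the hypothesis of the lemma bounding \texttt{Algorithm 2}. That lemma then yields a schedule of length at most $\frac{32}{15}C^*_{\max}$ in time $O(n^{1.5})$, settling both claims on this branch. In the complementary branch $s_2 \leq 3s_3$, \texttt{Algorithm 3} is called, and $s_1 \geq s_2$ together with $s_2 \leq 3s_3 = 3s_4$ is exactly the hypothesis of Lemma \ref{lm2}; that lemma guarantees a solution of value at most $2C^*_{\max}$ in time $O(n)$.

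It then remains to combine the two branches. Since $2 \leq \frac{32}{15}$ and $O(n) \subseteq O(n^{1.5})$, taking the worse of the two outcomes bounds the overall ratio by $\max\{32/15, 2\} = 32/15$ and the overall running time by $O(n^{1.5})$, which is the assertion. The only point I would check with any care---and the closest thing here to an obstacle---is that the dispatch and the lemma hypotheses tile the whole regime $s_1 \geq s_2 \geq s_3 = s_4$ without a gap: the boundary value $s_2 = 3s_3$ is routed to \texttt{Algorithm 3} (the test $s_2 > 3s_3$ failing there) and is covered by the weak inequality $s_2 \leq 3s_3$ of Lemma \ref{lm2}, so every admissible speed profile falls under exactly one applicable lemma.
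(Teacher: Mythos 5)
Your proposal is correct and is essentially the argument the paper intends: the paper leaves the proof implicit (the theorem is stated with a terminating $\Box$), precisely because it follows immediately from the two lemmas by the case analysis you give --- the branch $s_2 > 3s_3$ invokes the lemma for \texttt{Algorithm 2}, the branch $s_2 \leq 3s_3$ invokes Lemma~\ref{lm2}, and one takes $\max\{32/15,\, 2\} = 32/15$ for the ratio and $O(n^{1.5})$ for the time. Your attention to the boundary case $s_2 = 3s_3$ being routed to \texttt{Algorithm 3} and covered by its weak inequality is a worthwhile detail that the paper glosses over.
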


\section{Final remarks}

Our results can be generalized to more than 4 machines. First, suppose that the number of machines $m > 4$. Then the problem $Qm|p_i=1,G=bipartite|C_{\max}$ remains 
NP-hard if 
$s_1=s_2=s_3$. In fact, if $s_4 = \cdots = s_m \ll 1/mn$ then by the same argument as that used in the proof of Theorem \ref{tw21} we get the desired result.

Secondly, the problem $Qm|p_i=1,G=bipartite|C_{\max}$ can be solved to optimality in time $O(n^{1.5})$ by using an algorithm similar to \texttt{Algorithm 1}, if $s_1 \geq m(m+1)s_2$, 
$s_2 = \cdots = s_m$ and $\Delta(G) \leq m$. This is so because under these assumptions the ideal schedule length is $n/s \leq n/(m(m+1)s_2 + s_2 +\cdots+ s_m) = 
n/((m(m+1)s_2) + (m-1)s_2) = n/((m-1)^2s_2)$. 
On the other hand, $C(M_1) \leq mn/((m+1)^2ms_2) = n/((m+1)^2s_2) < n/((m-1)^2s_2)$, i.e. the schedule length on $M_1$ is shorter than the ideal schedule length. Thus an optimal schedule length of the whole system is determined by the 
optimal schedule length on machines $M_2,\ldots, M_m$, since we cannot do better by moving any job from $M_i$ to $M_1$ as the load on the first machine is maximal.

\end{document}